\DeclareMathOperator{\MOLS}{MOLS}
\DeclareMathOperator{\N}{N}
\newcommand{\mols}[2]{#1 \MOLS(#2)}
\newcommand{\etal}{\textit{et al.}\xspace}
\newcommand{\df}[1]{\hfill{#1}\hfill}
\author{Noah Rubin\inst{1} \and
Curtis Bright\inst{2} \and
Kevin K. H. Cheung\inst{1} \and
Brett Stevens\inst{1}}
\title{Integer and Constraint Programming Revisited for Mutually Orthogonal Latin Squares}
\titlerunning{IP and CP Revisited for MOLS}
\authorrunning{N.~Rubin, C.~Bright, K.~Cheung, B.~Stevens}
\institute{Carleton University, School of Mathematics and Statistics \and
University of Windsor, School of Computer Science
}
\begin{document}
\maketitle
\setcounter{footnote}{0}
\begin{abstract}
In this paper we provide results on
using integer programming (IP) and constraint programming (CP)
to search for sets of mutually orthogonal latin squares (MOLS).
Both programming paradigms have previously successfully been used to search for
MOLS, but solvers for IP and CP solvers have significantly improved in recent
years and data on how modern
IP and CP solvers perform on the MOLS problem is lacking.  Using
state-of-the-art solvers as black boxes we were able to
quickly find pairs of MOLS (or prove their nonexistence)
in all orders up to ten.  Moreover, we improve the effectiveness
of the solvers by formulating an extended symmetry breaking
method as well as an improvement to the straightforward CP encoding.
We also analyze the effectiveness of using CP and IP solvers
to search for
triples of MOLS, compare our timings to those which have been
previously published, and
estimate the running time of using this approach to resolve the
longstanding open problem of determining the existence of a triple
of MOLS of order ten.
\end{abstract}

\section{Introduction and Motivation}
A \emph{latin square of order $n$} is an $n\times n$ array, $L$, of symbols $\{0,1,\dotsc,n-1\}$ in which each symbol appears exactly once in each row and column.  Each of the squares in \cref{2mols4} is an example of a latin square of order 4. The entry in row~$i$ and column~$j$ of a square $L$ is denoted $L_{ij}$. Two latin squares of the same order, $L$ and~$M$, are said to be \textit{orthogonal} if there is a unique solution $L_{ij} = a$, $M_{ij} = b$ for every pair of $a,b \in \{0,1,\dotsc,n-1\}$.  A set of $k$ latin squares of order~$n$, is called a set of \textit{mutually orthogonal latin squares} (MOLS) if all squares are pairwise orthogonal---in which case we label the system as $\mols{k}{n}$.  \cref{2mols4} shows a set of $\mols{3}{4}$.  The maximum number of mutually orthogonal latin squares of order $n$ is denoted $\N(n)$.   There do not exist $\mols{4}{4}$, so $\N(4) = 3$.  Latin squares, sets of mutually orthogonal latin squares, and a myriad of related objects have numerous applications in statistics, reliability testing, coding theory, and recreational mathematics \cite{colbourn2006handbook,laywine1998discrete}.
\addtolength{\tabcolsep}{2pt}
\renewcommand{\arraystretch}{1}
\begin{figure}[t]
\centering
\begin{tabular}{|cccc|}
\hline
\cellcolor{blue!25}0 & \cellcolor{red!25}1 & \cellcolor{green!25}2 & \cellcolor{yellow!25}3 \\ 
\cellcolor{red!25}1 & \cellcolor{blue!25}0 & \cellcolor{yellow!25}3 & \cellcolor{green!25}2 \\ 
\cellcolor{green!25}2 & \cellcolor{yellow!25}3 & \cellcolor{blue!25}0 & \cellcolor{red!25}1 \\ 
\cellcolor{yellow!25}3 & \cellcolor{green!25}2 & \cellcolor{red!25}1 & \cellcolor{blue!25}0 \\ \hline
\end{tabular}
\quad
\begin{tabular}{|cccc|}
\hline
\cellcolor{blue!25}0 & \cellcolor{red!25}1 & \cellcolor{green!25}2 & \cellcolor{yellow!25}3 \\ 
\cellcolor{green!25}2 & \cellcolor{yellow!25}3 & \cellcolor{blue!25}0 & \cellcolor{red!25}1 \\ 
\cellcolor{yellow!25}3 & \cellcolor{green!25}2 & \cellcolor{red!25}1 & \cellcolor{blue!25}0 \\ 
\cellcolor{red!25}1 & \cellcolor{blue!25}0 & \cellcolor{yellow!25}3 & \cellcolor{green!25}2 \\ \hline
\end{tabular}
\quad
\begin{tabular}{|cccc|}
\hline
\cellcolor{blue!25}0 & \cellcolor{red!25}1 & \cellcolor{green!25}2 & \cellcolor{yellow!25}3 \\ 
\cellcolor{yellow!25}3 & \cellcolor{green!25}2 & \cellcolor{red!25}1 & \cellcolor{blue!25}0 \\ 
\cellcolor{red!25}1 & \cellcolor{blue!25}0 & \cellcolor{yellow!25}3 & \cellcolor{green!25}2 \\ 
\cellcolor{green!25}2 & \cellcolor{yellow!25}3 & \cellcolor{blue!25}0 & \cellcolor{red!25}1 \\ \hline
\end{tabular} \\[0.5\baselineskip]
\caption{A set of three mutually orthogonal latin squares of order four.\label{2mols4}}
\end{figure}
\renewcommand{\arraystretch}{1}
\addtolength{\tabcolsep}{-2pt}

A pair of orthogonal latin squares of order 4 were known to medieval Muslim mathematicians.  In 1700, the Korean mathematician Choi Seok-jeong presented $\mols{2}{9}$ and reported being unable to find a pair of order 10 \cite{colbourn2006handbook}.  Euler could construct a pair of orthogonal latin squares for all odd $n$ and for $n$ divisible by 4 \cite{euler_recherches_1782}. He knew that $\mols{2}{2}$ was impossible and failed to construct a pair of order 6 using any of the methods for which he had success for other $n$.  He verified that none of his construction methods could succeed when $n \equiv 2 \pmod 4$  and conjectured that $\mols{2}{n}$ exist if and only if $n \not\equiv 2 \pmod 4$.  In 1896, E. H. Moore proved that $\mols{(q-1)}{q}$ exist for every prime power $q$ and that if $\mols{k}{n}$ and $\mols{k}{m}$ exist then $\mols{k}{mn}$ exist \cite{MR1505714}. This proves that $\N(n) \geq q-1$ where $q$ is the smallest prime power in the prime power decomposition of $n$.  Tarry confirmed that $\mols{2}{6}$ did not exist in 1900 \cite{tarry_problem_1900}.  In 1949, Bruck and Ryser showed that $\N(n) < n-1$ if $n \equiv 1,2 \pmod 4$ and~$n$ cannot be written as a sum of two squares \cite{MR27520}. Bose and Shrikhande refuted Euler's Conjecture by constructing $\mols{2}{22}$ in 1958 \cite{MR104590}.  Two years later, Bose, Shrikhande, and Parker proved that $\mols{2}{n}$ exist for all $n \equiv 2 \pmod 4$ with $n=2,6$ being the only exceptions \cite{MR122729}.  The only values of $n$ for which it is known that $\N(n) = n-1$ are prime powers \cite{colbourn2006handbook}.  In one of the most famous computational results in combinatorics,  Lam, Thiel, and Swiercz showed that $\N(10) < 9$ \cite{MR1018454}.  This is the only known case where $\N(n) < n-1$ which is not ruled out by the Bruck--Ryser--Chowla Theorem \cite{colbourn2006handbook}.  It is known that $\N(n) \geq n^{1/14.8}$ for $n$ sufficiently large \cite{MR732823}.  The Handbook of Combinatorial Designs contains tables of the best known bounds on $\N(n)$ in 2007 \cite{colbourn2006handbook}.  It is an open problem if $\mols{3}{10}$ exist although McKay, Meynert, and Myrvold have eliminated the possibility that such a triple could have a non-trivial automorphism group \cite{MR2291523}.

Many mathematical methods for constructing sets of mutually orthogonal latin squares exist.  These range from algebraic to the recursive \cite{colbourn2006handbook}.  On the computational side, there have been exhaustive searches and non-exhaustive search using various metaheuristics.  Lam, Thiel, and Swiercz's proof showing the non-existence of $\mols{9}{10}$ was an exhaustive backtracking search aided by powerful theorems from coding theory and permutation groups to eliminate finding structures isomorphic to those already found \cite{MR1018454}. McKay, Meynert, and Myrvold use the orderly generation method to enumerate all non-isomorphic latin squares of orders up to 9 and all latin squares of order 10 with non-trivial automorphism groups \cite{MR2291523}.  They use the library {\tt nauty} \cite{nauty} to compute automorphisms and canonical representatives in each class. They also enumerate the different numbers of equivalence classes of these squares.  McKay and Wanless are able to enumerate all latin squares of order 11 \cite{MR2176596} at the cost of not computing the additional data reported by McKay, Meynert, and Myrvold.  They proceed by generating the squares row by row using an algorithms of Sade \cite{MR2176596,MR0027248}. Niskanen and \"{O}sterg\aa rd's clique finding software, {\tt cliquer} has been successfully used to find mutually orthogonal latin cubes \cite{MR3424339,cliquer}.
Kidd~\cite{kidd2012existence} and Benad\'e, Burger, and van Vuuren~\cite{benade2013enumeration} use custom-written backtracking searches to enumerate all triples of MOLS up to order 8.
Using satisfiability (SAT) solvers with an interface to {\tt nauty}, the nonexistence of $\mols{9}{10}$ was confirmed by Bright \etal by producing nonexistence proof certificates \cite{bright2021sat}.

Colbourn and Dinitz~\cite{MR1398191} implemented many of the combinatorial constructions to aid generating the tables given in the Handbook of Combinatorial Designs~\cite{colbourn2006handbook}; strictly speaking, these computational methods are not metaheuristics but neither are they exhaustive searches.  Much of Colbourn and Dinitz's work has recently been implemented in the SageMath open source mathematical software system \cite{sagemath}.  Elliott and Gibbons~\cite{MR1165804} use the {\em simulated annealing} metaheuristic to construct latin squares of orders up to 18 which contain no subsquares.  Magos~\cite{MR1373306} has used the Tabu metaheuristic to construct latin squares. Mariot \etal~\cite{MR3756068} have used evolutionary algorithms and cellular automata to generate orthogonal latin squares.

An alternative approach to constructing latin squares is to
formulate the problem in a declarative way and use an automated reasoning solver
to search for feasible solutions.  For example, Moura~\cite{Moura1996,Moura1999}
uses integer programming (IP) to search for certain combinatorial designs that are related to latin squares.
Huang \etal~\cite{huang2019investigating} use constraint programming (CP) to search for orthogonal golf designs.
FeiFei and Jian~\cite{ma2013finding} use a first-order logic model generator to search for orthogonal latin squares while
Zaikin and Kochemazov~\cite{zaikin2015search} use a propositional logic solver (i.e., SAT solver)
to search for orthogonal diagonal latin squares.
Gomes, Regis, and Shmoys~\cite{gomes2004improved} employ a hybrid CP/IP approach to the problem of completing
partial latin squares and Appa, Mourtos, and Magos~\cite{appa2006searching,CP2002} investigate using IP
and CP algorithms for generating pairs and triples of mutually orthogonal latin squares.
They report encouraging results and propose that a hybrid IP/CP strategy of integrating the two
techniques might have some success at finding $\mols{3}{10}$ if such a set exists.
General background for IP and CP including Appa \etal's work is reviewed in \cref{background}
and the IP/CP models are outlined in \cref{model_imp}.

In order to explore the feasibility of the IP/CP approach on modern software and hardware
we implement IP and CP models for the mutually orthogonal latin squares problem
and examine the performance of solvers run on those models.
Using out-of-the-box solvers on a single desktop computer we
find mutually orthogonal latin squares (or disprove their existence)
in all orders up to ten.
We improve the performance of the solver by formulating
an improved constraint programming formulation (described in \cref{subsec:cp_model}) and
an extended form of symmetry breaking (described in \cref{subsec:symm_break}).
We give detailed timings for each method (see \cref{sec:results}) and
also apply the method to search for triples of orthogonal latin squares
(see \cref{3mols_timings}).
Finally, we compare our timings with previously published timings (see \cref{sec:comparison})
and comment on discrepancies that we observed between our results and prior results.

\section{Background}\label{background}

Integer linear programming (IP) and constraint programming (CP) are
two paradigms for solving combinatorial optimization and search problems.
IP solvers depend heavily on numerical solutions to linear
programming problems whereas CP solvers are based on search techniques
such as domain reduction. 
The two approaches offer complementary strengths
and there has been a significant amount of research on integrating the
two to solve difficult combinatorial optimization search problems~\cite{hooker}.

Both approaches use the ``divide and conquer'' paradigm of splitting a problem into subproblems by
choosing a variable and then branching on each possible value for that variable.  Each subproblem
formed in this way is called a \emph{node} of the search.  A \emph{branch} of the search is formed
by assigning a fixed value to the branching variable selected in each node.
IP and CP solvers typically report the number of nodes and/or the number of branches that they
explore during the search.

IP solvers also solve the linear programming ``relaxation'' of the problem by
dropping the constraint that the variables must be integers.  If the relaxation has no solution over
the rationals then the original problem must also be infeasible.  If the relaxation has a solution
then the solver examines if it violates any inequalities that can be derived from the original
set of linear constraints assuming that the variables must take on integer values.  Such inequalities
are known as \emph{cutting planes} and they can be effective at pruning the search space and forcing the
solution of the linear program to integer values.  If no cutting planes can be derived and the
relaxation solution has non-integral values then a variable is chosen to branch on and new nodes are
created.

Intuitively, a CP solver is effective at enumerating solutions
quickly as it is not required to solve linear programs during its search.  However,
the relaxation solutions provided by the IP solver---despite being relatively expensive to compute---%
help provide a ``global'' perspective of the search space.
In particular, Appa \etal present IP and CP models to search for MOLS and
develop sophisticated techniques for combining solvers in ways
that exploit each solver's strengths~\cite{appa2006searching,CP2002}.

In addition to pure IP/CP approaches they present ways of embedding CP into IP and vice versa.
They are able to improve on an ``out-of-the-box'' solver using these hybrid approaches in each
case when searching for $\mols{2}{n}$ with $n\leq12$.  They also extend their method to searching
for triples of MOLS---using a CP solver to complete a single latin square and an IP solver to
search for a pair of MOLS orthogonal to the first square.  If the IP solver determines such a
pair does not exist then the CP solver finds a new latin square and the process repeats.
It is also possible for the CP solver to pass only a partially-completed first square to the IP
solver.  This makes the CP solver's search easier at the cost of making the IP solver's search
more difficult and Appa \etal provide timings demonstrating that the runtime is typically minimized
by passing a half-completed first square to the IP solver.

Moreover, Appa \etal analyze the symmetries present in the $\mols{k}{n}$ problem and provide effective
methods of removing much of the symmetry.  This is important
in order to prevent solvers from exploring
parts of the search space which are actually the same under some symmetry.
We discuss their symmetry breaking method in detail in \cref{subsec:symm_break} and derive
an extension that we use in our work.

\section{Models}\label{model_imp}

In this section we describe the models that we use when searching for MOLS
using integer programming (see \cref{subsec:ip_model}) and constraint programming (see \cref{subsec:cp_model})
as well as the symmetry breaking methods that we use (see \cref{subsec:symm_break}).

\subsection{Integer Programming Model}\label{subsec:ip_model}

One straightforward method of approaching the $\mols{2}{n}$ problem is to express it as a pure binary linear program and use integer programming (IP) solvers to generate solutions~\cite[\S26.3.IV]{dantzig1963linear}.
Let $X$ and $Y$ be two mutually orthogonal latin squares of order $n$.
Our IP model for a set of two mutually orthogonal latin squares contains the $n^4$ binary variables
\[ x_{ijkl} \coloneqq
\begin{cases} 
      1 & \text{if $k$ and $l$ appear in cells $(i,j)$ of squares $X$ and $Y$ respectively} \\
      0 & \text{otherwise} 
   \end{cases}
\]
for all $ i,j,k,l \in \{0,1,\dotsc,n-1\}$.
 
We encode the latin and orthogonality constraints as six sets of $n^2$ equalities grouped by which subscripts of $x_{ijkl}$ are fixed: 
\begin{align*}
\sum_{0\leq k,l < n} x_{ijkl} &= 1 \: \forall i,j  && \text{each cell contains only 1 value} \\
\sum_{0\leq j,l < n} x_{ijkl} &= 1 \: \forall i,k  && \text{latin property in rows of $X$} \\
\sum_{0\leq j,k < n} x_{ijkl} &= 1 \: \forall i,l  && \text{latin property in rows of $Y$} \\
\sum_{0\leq i,l < n} x_{ijkl} &= 1 \: \forall j,k  && \text{latin property in columns of $X$} \\
\sum_{0\leq i,k < n} x_{ijkl} &= 1 \: \forall j,l  && \text{latin property in columns of $Y$} \\
\sum_{0\leq i,j < n} x_{ijkl} &= 1 \: \forall k,l  && \text{orthogonality of $X$ and $Y$} 
\end{align*}
This model can be extended to search for $\mols{k}{n}$ using $n^{k+2}$ variables and $\binom{k+2}{2}n^2$ constraints, but this easily exceeds our ability to solve for all but the smallest orders of~$k$.
Instead, in order to seach for triples of MOLS we employ an alternative method described
by Appa \etal~\cite{appa2006searching} that relies on using a separate solver to fill in
the entries of a third square~$Z$.
Assuming the entries of $Z$ have been fixed in advance,
the $\mols{3}{n}$ problem may then be represented
as an extension of the above $\mols{2}{n}$ model
with the following additional constraints:
\begin{align*}
\sum_{\substack{Z_{ij} = z \\ 0\leq l < n}} x_{ijkl} &=1 \: \forall k,z  && \text{orthogonality of $X$ and $Z$} \\
\sum_{\substack{Z_{ij} = z \\ 0\leq k < n}} x_{ijkl} &=1 \: \forall l,z  && \text{orthogonality of $Y$ and $Z$}
\end{align*}
 
\subsection{Constraint Programming Model}\label{subsec:cp_model}

A CP solver allows a more natural formulation of the $\mols{2}{n}$ problem using the $2n^2$ integer-valued variables
\begin{align*}
X_{ij} &\coloneqq \text{value of cell $(i,j)$ in square $X$} , \\
Y_{ij} &\coloneqq \text{value of cell $(i,j)$ in square $Y$} ,
\end{align*}
where $i,j,X_{ij},Y_{ij} \in \{ 0, 1, \dotsc , n-1 \}$.
The squares $X$ and $Y$ can be forced to be latin squares via ``AllDifferent'' constraints
which specify that the rows and columns of the squares each contain different values.
One natural way of encoding orthogonality (c.f.~\cite{appa2006searching}) is to
use a set of $n^2$ variables $Z_{ij}$ defined by the linear constraints
\[ Z_{ij} = X_{ij} + n Y_{ij} \in \{0,1,\dotsc,n^2-1\} . \]
The squares $X$ and $Y$ are orthogonal exactly when the variables $Z_{ij}$ contain no duplicate values and
therefore the single constraint
$\text{AllDifferent}(Z_{ij} \: \forall i,j)$ ensures that $X$ and $Y$ are orthogonal.

We also used an alternative formulation of ensuring orthogonality that uses
no arithmetic and shorter AllDifferent constraints.
In order to describe it, note that each row of a latin square may be considered as a permutation of $\{0,\dotsc,n-1\}$.
Denote by $AB$ the square whose $i$th row consists of the composition of the permutations $A_i$ and $B_i$ (the $i$th rows of $A$ and $B$).
Composition is performed left-to-right following the standard convention in the field,\footnote{The standard reference~\cite{laywine1998discrete}
defines function composition as performed right-to-left but in fact follows the left-to-right convention.} i.e., $fg$ denotes
the function $x\mapsto g(f(x))$ and so the $(i,j)$th entry of $AB$ is $B_i(A_i(j))$.
Additionally, $A^{-1}$ denotes the square where each row consists
of the inverse of the permutation formed by the corresponding row of $A$.

Our alternative orthogonality encoding is based on~\cite[Theorem 6.6]{laywine1998discrete}
which implies that two latin squares $X$ and $Y$ are orthogonal if and only if there is a latin square $Z$ such that $XZ=Y$.
The additional variables in our alternative orthogonality encoding are
\begin{align*}
Z_{ij} &\coloneqq \text{value of cell } (i,j) \text{ in square } Z = X^{-1}Y ,
\end{align*}
where $Z_{ij} \in \{ 0,1, \dotsc ,n-1 \} $.  In order to ensure $Y=XZ$
the $(i,j)$th entry of $Y$ is set equal to the $(i,X_{ij})$th entry of $Z$.  This
is accomplished with the ``element indexing'' constraint $Z_i[X_{ij}] = Y_{ij}$ where $Z_i$ is the vector of
variables corresponding to the $i$th row of $Z$.
Some CP solvers such as OR-Tools~\cite{ortools}
have native support for such constraints via what are called ``VariableElement'' constraints.
The constraints encoding that the squares $X$ and $Y$ are orthogonal are then
$\text{AllDifferent}(Z_{ij} \: \forall j)$ and $\text{AllDifferent}(Z_{ij} \: \forall i)$.
Altogether this encoding uses $6n$ AllDifferent constraints together
with $n^2$ element indexing constraints.
This encoding also may be straightforwardly extended to encode the $\mols{3}{n}$ problem
by adding another $n^2$ variables encoding the entries of a third latin square $U$
and using the same orthogonality constraints as above between $X$ and $U$ and $Y$ and $U$
(requiring another $2n^2$ variables $Z_{ij}^{XU}$ and $Z_{ij}^{YU}$).

\subsection{Symmetry Breaking}\label{subsec:symm_break}

There are a large number of possible symmetries in the $\mols{k}{n}$ problems.  In any set of $\mols{k}{n}$, the squares themselves can be permuted, the rows of all squares can be permuted simultaneously, the columns of all the squares can be permuted simultaneously, the symbol sets within each square can be permuted independently and the squares may all be replaced with their transposes. All of these operations preserve the latin properties and orthogonality of the set of squares \cite{colbourn2006handbook}.  This group of possible symmetries has order $2k!(n!)^{k+2}$ and the search space can be reduced significantly for any elimination of possible symmetries that still permits finding an isomorphic representative of any solution.  Appa \etal fix the first row of every square to be in lexicographic order which eliminates the permutations of the rows and fixes the symbol permutations to be the same in each square.  They fix the column of the first square to be in lexicographic order which eliminates permutations of the rows.  This reduces the possible symmetries to $2k!n!$. With these cells fixed, the first column of $Y$ must be a permutation where $Y_{i0} \neq i$ for $1 \leq i < n$.  The number of such permutations is approximately $(n-1)!/e$ \cite{MR1971432}. Appa \etal~\cite{appa2006searching,CP2002} show every possible solution is isomorphic to one satisfying $Y_{10} = 2$, $Y_{i0} \neq i$ and $Y_{i0} \leq i+1$ for $1 \leq i < n$. To understand the magnitude of this reduction of the search space we must count the number of these solutions. Let $P(n)$  be the set of all permutations of $\{0,1,\ldots,n-1\}$ and 
\[
  A(n) = \{\rho \in P(n): \rho(0)= 0,\; \rho(1) =2,\; \rho(i) \neq i,\; \rho(i) \leq i+1\},
\]
and $a(n) = |A(n)|$.
\begin{proposition} Let $F(n)$ be the $n$th Fibonacci number.  Then
  $
    a(n) = F(n-2).
  $
  \end{proposition}
  See the appendix for a proof.
    With this reduction in cases, Appa \etal have reduced the number of first columns for $Y$ from $(n-1)!/e$ to
    \[
  a(n) = F(n-2) \approx \frac{\sqrt{5}}{5}\biggl ( \frac{1+\sqrt{5}}{2} \biggr )^{n-2}
\]
choices.  This is a very significant reduction.

By exploiting the disjoint cycle structure of these permutations, we have been able to reduce this number to $b(n)$, the number of partitions of $n-1$ into parts of size greater than 1.  The values of $b(n)$ are sequence A002865 at the Online Encyclopedia of Integer Sequences and the value of $b(n)$ is approximately \cite{A002865}
\[
b(n) \approx \frac{\pi  e^{\pi\sqrt{2(n-1)/3}}}{12\sqrt{2}(n-1)^{3/2}}.
\]
This growth rate is substantially slower than the Fibonacci numbers. The values of $b(n)$ for $n\leq12$ are:
  \[
  \begin{array}{c@{\quad}c@{\quad}c@{\quad}c@{\quad}c@{\quad}c@{\quad}c@{\quad}c@{\quad}c@{\quad}c@{\quad}c}
    n & 3 & 4 & 5 & 6 & 7 & 8 & 9 & 10 & 11 & 12 \\ \hline
    b(n) & 1 & 1 & 2 & 2 & 4 & 4 & 7 & 8 & 12 & 14
  \end{array}
  \]
  Let $X$ and $Y$ be a pair of orthogonal latin squares of order $n$ such that the first rows of both are in lexicographic order and the first column of $X$ is in lexicographic order.  The {\em first column permutation } of $X$ and $Y$ is the permutation of the symbols $\{1, 2, \ldots n-1\}$ defined by $\rho(i) = Y_{i0}$. The {\em ordered cycle} of $X$ and $Y$ is  the representation of $\rho$ as a product of disjoint cycles
  \[
    (a_{0,0},a_{0,1},\ldots,a_{0,l_0-1})(a_{1,0},\ldots,a_{1,l_1-1})\cdots(a_{c,0},\ldots,a_{c,l_c-1})
    \]
    where each cycle is the lexicographic least cyclic shift and the cycles are in lexicographic order \cite{MR1409812}.
    The {\em ordered cycle type} of $X$ and $Y$ is $(l_0,l_1,\ldots,l_c)$.
    \begin{theorem} \label{conj}
      Let $X$ and $Y$ be a pair of orthogonal latin squares.  They are isomorphic to a pair $X'$, $Y'$ where the first row of both is in lexicographic order, the first column of $X'$ is in lexicographic order and the ordered cycle type of $X'$ and $Y'$ is non-decreasing.
\end{theorem}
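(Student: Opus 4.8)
The plan is to first put the pair into the normal form of Appa \etal and then exhibit a residual symmetry that conjugates the first column permutation, so that its cycles can be sorted by length. First I would apply the reduction already described: using the simultaneous row, column, and (per-square) symbol permutations, replace $(X,Y)$ by an isomorphic pair---which I continue to call $(X,Y)$---whose first rows are both $(0,1,\dots,n-1)$ and whose first column of $X$ is $(0,1,\dots,n-1)$. Because the first row of $Y$ is also in lexicographic order, every pair of the form $(a,a)$ already occurs in row~$0$; orthogonality then forbids $(i,i)$ in column~$0$, so $\rho(i)=Y_{i0}\neq i$ for $1\le i<n$ while $\rho(0)=Y_{00}=0$. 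Thus $\rho$ is a derangement of $\{1,\dots,n-1\}$, and its disjoint cycles all have length at least two and partition $\{1,\dots,n-1\}$.

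The key observation is that for any permutation $\sigma$ of $\{0,\dots,n-1\}$ fixing $0$, the operation that simultaneously permutes the rows by $\sigma$, permutes the columns by $\sigma$, and relabels the symbols of both $X$ and $Y$ by $\sigma$---replacing the $(i,j)$ entry of $X$ by $\sigma(X_{\sigma^{-1}(i),\sigma^{-1}(j)})$ and likewise for $Y$---is a composition of the allowed symmetries and so preserves the latin property and orthogonality. I would verify by direct substitution, using $\sigma(0)=0$, that this \emph{conjugation} operation fixes all three normal-form conditions (both first rows and the first column of $X$ remaining in lexicographic order) and sends the first column permutation to its conjugate $\sigma\rho\sigma^{-1}$; for instance the new first column entry of $Y$ is $\sigma(Y_{\sigma^{-1}(i),0})=\sigma(\rho(\sigma^{-1}(i)))$. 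Hence conjugation of $\rho$ by an arbitrary $0$-fixing $\sigma$ is realizable while staying within the normal form.

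It then remains to choose $\sigma$ so that the ordered cycle type becomes non-decreasing. Since conjugation preserves the multiset of cycle lengths, I would pick $\sigma$ to relabel $\{1,\dots,n-1\}$ so that the cycles of $\rho$, listed in non-decreasing order of length $\lambda_{(1)}\le\lambda_{(2)}\le\cdots$, are carried onto the consecutive blocks $\{1,\dots,\lambda_{(1)}\}$, $\{\lambda_{(1)}+1,\dots,\lambda_{(1)}+\lambda_{(2)}\}$, and so on, each block being relabeled as the increasing cycle on that block. In the conjugate $\sigma\rho\sigma^{-1}$ the least element of each cycle is then the start of its block, and these minima strictly increase from block to block; consequently the canonical decomposition (each cycle written as its lex-least cyclic shift, the cycles ordered lexicographically) lists the cycles precisely in non-decreasing length order. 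The resulting pair $X'$, $Y'$ is isomorphic to the original and has non-decreasing ordered cycle type, as required.

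The one step needing genuine care---and the main obstacle---is reconciling the lexicographic definition of the ordered cycle type with the desired sorted order: conjugation only guarantees the right multiset of lengths, and I must argue that \emph{some} conjugate actually realizes the sorted order under the lex-based canonical form. This is exactly what the consecutive-block relabeling accomplishes, since it forces the cycle minima to increase in lockstep with the cycle lengths. I note finally that the argument uses only row, column, and symbol permutations; neither the transpose nor the square-swap symmetry is required.
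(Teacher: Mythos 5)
Your proposal is correct and follows essentially the same route as the paper's proof: normalize via the Appa \etal reduction, then apply a simultaneous row/column/symbol relabeling fixing $0$ that conjugates the first column permutation $\rho$ and sends its cycles, sorted by length, onto consecutive increasing blocks of $\{1,\dots,n-1\}$. The paper merely factors your single conjugation step into a symbol permutation by $r$, a column permutation restoring the first rows, and a row permutation by $r^{-1}$, which amounts to the same operation.
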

See the appendix for a proof.
This theorem applies equally well to a set of $\mols{k}{n}$.

\section{Results}\label{sec:results}

In this section we provide running times for our implementations.
All times were recorded using a single thread on an
Intel Core i9-9900K processor running at 3.6~GHz and with 32~Gb of memory.
The models were compiled into executable code using Microsoft Visual C++ 2019.
The IP models were solved using Gurobi version 9~\cite{gurobi_sym},
the CP models were solved using OR-Tools version 8~\cite{ortools},
and all default solver parameters were used with the exception of
disabling multi-threading.

By default OR-Tools uses a branching strategy that chooses the first
unassigned variable.  We declared the variables so the values of the
entries of $X$ and $Y$ are ordered by row and column
(i.e., in the order $X_{00}$, $Y_{00}$, $X_{01}$, $Y_{01}$, $\dotsc$)
and the variables ensuring orthgonality are ordered last.
Our code is publicly available at \href{https://github.com/noahrubin333/CP-IP}{github.com/noahrubin333/CP-IP}.
We first describe the times for searching for a pair of MOLS and then
describe our experiments searching for a triple of MOLS.

\subsection{$\mols{2}{n}$ Timings}\label{2mols_timings}

As a ``baseline'' we first present timings for the models without
any symmetry breaking included.
In these instances the solver either reported infeasible (in order six), explicitly found a pair of MOLS,
or timed out after 60,000 seconds.
In \cref{no_sym_timings} we compare the IP model along with
the two CP models---one with linear constraints (CP-linear)
and one with indexing constraints (CP-index).
The timings in the table show that the CP-index model generally performs
the best with the exception of $n=6$ where Gurobi performs much better.

The order six case is exceptional since it is the only infeasible case
among the cases described in \cref{no_sym_timings}.  In this case the solver
has to examine the entire search space in order to rule out the existence
of $\mols{2}{6}$ and in such a case symmetry breaking is particularly
important.  We found that Gurobi performed exceptionally well in this
case as it has some symmetry breaking routines that automatically detected
and removed symmetry present in the search space~\cite{gurobi_sym}.

Due to the superiority of the CP-index model over the CP-linear model in
the remaining timings we only present timings for the CP-index model.
In \cref{ipcp_timings} we present detailed timings for the IP and CP-index
models with the ``dominance detection'' symmetry breaking of Appa \etal
and the ``cycle type'' symmetry breaking method described in
\cref{subsec:symm_break}.

\begin{table}[t]
\centering
\begin{tabular}{c@{\;\quad}r@{\quad}r@{\quad}r@{\quad}r@{\quad}r@{\quad}r}
Model & \df{5} & \df{6} & \df{7} & \df{8} & \df{9} & \df{10} \\ \hline
IP & 0.09  & 2.4  & 1.3  & 6.0  & 284.5  & 9,502.9  \\
CP-linear & 0.03  & Timeout & 3.4  & 81.9  & 1,517.0  & 12,515.4  \\
CP-index & 0.03  & 49,998.4  & 3.6  & 28.2  & 328.4  & 189.6 
\end{tabular}  \\[0.5\baselineskip]
\caption{Timings (in seconds) for orders $5\leq n\leq10$ without symmetry breaking;
all models timed out at 60,000 seconds for $n=11$.
\label{no_sym_timings}}
\end{table}

\begin{table}[t]
\centering
\begin{tabular}{c@{\;\,}l@{\quad}r@{\quad}r@{\quad}r@{\quad}r}
$n$ & \df{Sym.~Breaking} & \df{IP Time} & \df{CP Time} & \df{IP Nodes} & \df{CP Branches} \\ \hline
6 & Dominance & 2 & 0 & 1,551 & 5,048 \\ 
6 & (1, 5) & 0 & 0 & 0 & 2,286 \\
6 & (1, 2, 3) & 0 & 0 & 0 & 2,312 \\
7 & Dominance & 6 & 0 & 2,557 & 4,981 \\ 
7 & (1, 2, 2, 2) & 0 & 0 & 102 & 4,255 \\
7 & (1, 6) & 0 & 0 & 102 & 3,686 \\
7 & (1, 3, 3) & 0 & 0 & 102 & 7,508 \\
7 & (1, 2, 4) & 0 & 0 & 102 & 9,178 \\
8 & Dominance & 23,729 & 12 & 3,159,370 & 291,275 \\ 
8 & (1, 3, 4) & 4 & 5 & 107 & 140,352 \\
8 & (1, 2, 2, 3) & 5 & 12 & 271 & 245,779 \\
8 & (1, 2, 5) & 4 & 8 & 106 & 177,662 \\
8 & (1, 7) & 5 & 8 & 121 & 170,305 \\
9 & Dominance & Timeout & 110 & 4,932,210 & 2,336,846 \\ 
9 & (1, 3, 5) & 299 & 509 & 7,032 & 7,823,017 \\
9 & (1, 2, 2, 4) & 1,133 & 546 & 14,917 & 5,724,059 \\
9 & (1, 8) & 504 & 292 & 9,106 & 4,554,707 \\
9 & (1, 2, 6) & 460 & 366 & 8,419 & 6,000,262 \\
9 & (1, 2, 2, 2, 2) & 714 & 39 & 11,656 & 912,110 \\
9 & (1, 4, 4) & 321 & 632 & 7,021 & 9,197,213 \\
9 & (1, 2, 3, 3) & 753 & 134 & 11,827 & 2,570,887 \\
10 & Dominance & Timeout & 1,486 & 3,302,400 & 24,843,924 \\ 
10 & (1, 2, 2, 5) & 14,256 & 2,312 & 96,111 & 30,658,692 \\
10 & (1, 2, 2, 2, 3) & 67,767 & 2,869 & 442,145 & 39,949,251 \\
10 & (1, 9) & 16,339 & 2,375 & 108,529 & 31,813,531 \\
10 & (1, 2, 3, 4) & 88,183 & 21,194 & 551,128 & 191,728,592 \\
10 & (1, 3, 6) & 57,652 & 5,169 & 361,129 & 72,848,024 \\
10 & (1, 2, 7) & 19,330 & 423 & 108,138 & 5,956,008 \\
10 & (1, 3, 3, 3) & 53,834 & 1,297 & 368,203 & 17,850,420 \\
10 & (1, 4, 5) & 22,470 & 5,205 & 160,559 & 74,229,870
\end{tabular}  \\[0.5\baselineskip]
\caption{Detailed IP and CP timings (in seconds) by symmetry breaking type.\label{ipcp_timings}}
\end{table}

For the ``cycle type'' symmetry breaking method the first column of $X$ was fixed in sorted order
and a random assignment of the first column of $Y$
was assigned---only subject to the constraint that $Y_{i0}\neq i$ for $i>0$.
In the IP model the appropriate variables were assigned to $1$ to encode
the values in the first columns of $X$ and~$Y$.
In the CP model the variables in the first column
were set by restricting their domains to a single element.

A Python script was used to determine the cycle type of the first column
permutation of $Y$ and then categorize that instance into one of the
$b(n)$ cycle type equivalence classes described in \cref{subsec:symm_break}.
In each order, 100 independent runs (with the exception of $n=10$ which used 10 runs)
with no timeout were made in order to determine a ``typical'' running time for each cycle type.
\cref{ipcp_timings} shows the result of the median run for each cycle type when the runs were sorted
in terms of running time.  In the IP model the cycle type
symmetry breaking method performed much better
than the dominance detection method.  In the CP model
the results were highly variable: some runs
completed much faster using dominance detection
and other runs completed much slower.

\subsection{$\mols{3}{n}$ Timings}\label{3mols_timings}
Lastly, we give timings for the $\mols{3}{n}$ models presented in \cref{model_imp}.
To implement these models we used the CP solver to enumerate latin squares of order $n$ and these are taken as our third square.
For each square we generate the constraint sets to encode that the third square is orthogonal to the first two squares and then solve the
model using an IP or CP solver.
If the model returns a feasible solution then we have found a solution to $\mols{3}{n}$,
otherwise we continue by replacing the aforementioned constraints with ones generated from the next latin square produced by the CP solver.
Symmetry breaking was disabled except for specifying that the first row of each square occurs in sorted order.
The average time it took the solver to solve a single instance with the third square fixed is given in \cref{times_3MOLS}.

A $\mols{3}{7}$ was found after 11,785 attempts and a $\mols{3}{8}$ was found after a single attempt.
We observed that the very first latin square of order 8 generated by the CP solver
(using the variable ordering described in \cref{sec:results}) is highly structured and
part of a mutually orthogonal triple---this may explain why the timings of Appa \etal~\cite{appa2006searching,CP2002}
were particularly fast in order 8.  See the appendix for the $\mols{3}{8}$ that we found with the
highly structured third square.

\begin{table}[t]
\centering
\begin{tabular}{c@{\quad}r@{\quad}r@{\quad}r@{\quad}r@{\quad}r@{\quad}r}
Model & \df{5} & \df{6} & \df{7} & \df{8} & \df{9} & \df{10} \\ \hline
IP       & 0.0 & 0.5 & 4.2 & 82.0 & Timeout & Timeout \\
CP-index & 0.0 & 0.1 & 0.2 & 0.4 & 6.3 & 1470.3
\end{tabular}  \\[0.5\baselineskip]
\caption{Average time (in seconds) for solving a $\mols{3}{n}$ instance with given order~$n$ and the third square fixed;
the timeout was set to 50,000 seconds.\label{times_3MOLS}}
\end{table}

\section{Comparison with Prior Work}\label{sec:comparison}

Table~\ref{tbl:compare} compares our times with those of Appa, Magos, and Mourtos~\cite{appa2006searching,CP2002}.
The times are not directly comparable since Appa \etal
perform their timings on a Pentium-III processor at 866~MHz with 256~Mb of RAM
and make a number of custom modifications to the behaviour of their solvers.
However, they also provide the running
times using a ``baseline'' version of the IP solver
XPRESS-MP (i.e., with no modifications or customizations).
\begin{table}[t]
\centering
\begin{tabular}{c@{\quad}r@{\quad}r@{\quad}r@{\quad}r}
& \multicolumn{2}{c}{Appa \etal~\cite{CP2002}} & \multicolumn{2}{c}{Our work} \\
$n$ & IP Time & Nodes/sec & IP Time & Nodes/sec \\ \hline
5 & 12 & 0 & 0 & 11 \\
6 & 1,843 & 16 & 2 & 500 \\
7 & 7,689 & 3 & 2 & 76 \\
8 & 294 & 4 & 6 & 18 \\
9 & 32,341 & 10 & 284 & 22 \\
10 & 37,065 & 154 & 9,503 & 6 \\
11 & 46,254 & 3,333 & Timeout & 4 \\
12 & 59,348 & 11,293 & Timeout\rlap{$^*$} & 20
\end{tabular}  \\[0.5\baselineskip]
\caption{Comparison of the ``baseline'' times~\cite{CP2002}
and our IP times with no symmetry breaking enabled.
For $n=12$, Gurobi's heuristic search instantly found a pair of MOLS.
This was disabled to determine the node processing time;
the timeout was set to 60,000 seconds.
\label{tbl:compare}
}
\end{table}
These baseline timings should make it possible to compare how
far software and hardware has come since 2002.
Surprisingly, we find that the running
times reported by Appa \etal for $n>10$ are better than our timings (see \cref{tbl:compare}).
In particular, our processing time per node tends to increase in the order while their
processing time per node seems to decrease exponentially in the order.%
\footnote{The number of nodes processed as reported in~\cite{CP2002}
is not consistent with the data provided elsewhere~\cite{appa2006searching,mourtos2003integer}.
Regardless, the node processing time in every case is reported to increase exponentially.}
By personal communication~\cite{mourtos_pc} we have been informed
the observed times may be the result of a
restriction of the search space that was undocumented and no longer known to the authors.

The code of Appa \etal is not available for review, but the analysis
reported by Mourtos~\cite[\S6.5.3]{mourtos2003integer} does suggest the search space
was restricted in some way.  In order to estimate the running time to perform an exhaustive
search for $\mols{3}{10}$, Mourtos reports the running time of a hybrid CP/IP model where the
CP solver enumerates solutions of the third square and the IP solver attempts to find another
$\mols{2}{10}$ orthogonal to the third square.  After randomly fixing zeros in the third square
the combined CP/IP model reported infeasibility for the specific fixing of zeros in an average of 45,134 seconds.
However, there are over $7.5\cdot10^{24}$ possibilities~\cite{A000315}
for the final square (accounting for the fixed first row and fixed zeros)
that the CP solver would have had to enumerate---indicating that the search space must have
been further restricted somehow for the CP solver to finish its enumeration in a reasonable time.

\section{Conclusion}\label{sec:conclusion}

In this paper we use integer and constraint programming solvers in the search for mutually orthogonal
latin squares.  We demonstrate that modern state-of-the-art solvers are able to 
find pairs of MOLS (or demonstrate their nonexistence) in a reasonable amount of time for all orders $n\leq10$
but struggle with higher orders.
We continue the work of Appa \etal by extending their symmetry breaking method and providing an alternative
CP formulation that performs better in our implementation.

We also examine the feasibility of using CP/IP solvers in the search for triples of MOLS.
One of the motivations of this work was to examine the feasibility of using CP and IP solvers
in the search for a triple of MOLS of order ten.  We were encouraged by the estimations presented
by Mourtos~\cite{mourtos2003integer}
who fixed a pattern of zeros in one of the squares and reported that
it is possible to determine the nonexistence of a triple of MOLS containing that specific
pattern of zeros in about 12.5 hours.
After applying symmetry breaking there are $9!$ ways of fixing the zeros in one of the squares
and this results in an estimate of about 520 years of computing time in order to resolve the question of the
existence of $\mols{3}{10}$.  Unfortunately, our results find this estimate to be much too
optimistic.  Based on personal communication~\cite{mourtos_pc}
and a careful analysis of previous work we conclude that
previous results included additional restrictions of the search space.
We were unable to rule out even a single fixing of zeros from
appearing in a $\mols{3}{n}$.  Our CP implementation required an average of about 25 minutes to
show that a fixed latin square could not be part of a $\mols{3}{n}$.  Given that
there are about $7.5\cdot10^{24}$ possibilities for a fixed latin
square (after symmetry breaking) this method would require an estimated $10^{20}$ years
to resolve the $\mols{3}{n}$ existence question---%
clearly impractical, but useful as a
challenging benchmark for IP and CP solvers.

\paragraph{Acknowledgement.}
We thank Yiannis Mourtos for answering many questions concerning his work.

\bibliographystyle{splncs04} 
\bibliography{references}

\begin{thebibliography}{10}
\providecommand{\url}[1]{\texttt{#1}}
\providecommand{\urlprefix}{URL }
\providecommand{\doi}[1]{https://doi.org/#1}

\bibitem{appa2006searching}
Appa, G., Magos, D., Mourtos, I.: Searching for mutually orthogonal latin
  squares via integer and constraint programming. European journal of
  operational research  \textbf{173}(2),  519--530 (2006).
  \doi{10.1016/j.ejor.2005.01.048}

\bibitem{CP2002}
Appa, G., Mourtos, I., Magos, D.: Integrating {Constraint} and {Integer}
  {Programming} for the {Orthogonal} {Latin} {Squares} {Problem}. In: Goos, G.,
  Hartmanis, J., van Leeuwen, J., Van~Hentenryck, P. (eds.) Principles and
  {Practice} of {Constraint} {Programming} - {CP} 2002, vol.~2470, pp. 17--32.
  Springer Berlin Heidelberg, Berlin, Heidelberg (2002).
  \doi{10.1007/3-540-46135-3\_2}, series Title: Lecture Notes in Computer
  Science

\bibitem{benade2013enumeration}
Benad{\'e}, J.G., Burger, A.P., van Vuuren, J.H.: The enumeration of $k$-sets
  of mutually orthogonal {Latin} squares. In: Proceedings of the 42th
  Conference of the Operations Research Society of South Africa, Stellenbosch.
  pp. 40--49 (2013)

\bibitem{MR732823}
Beth, T.: Eine {B}emerkung zur {A}bsch\"{a}tzung der {A}nzahl orthogonaler
  lateinischer {Q}uadrate mittels {S}iebverfahren. Abh. Math. Sem. Univ.
  Hamburg  \textbf{53},  284--288 (1983). \doi{10.1007/BF02941326}

\bibitem{MR104590}
Bose, R.C., Shrikhande, S.S.: On the falsity of {E}uler's conjecture about the
  non-existence of two orthogonal {L}atin squares of order {$4t+2$}. Proc. Nat.
  Acad. Sci. U.S.A.  \textbf{45},  734--737 (1959). \doi{10.1073/pnas.45.5.734}

\bibitem{MR122729}
Bose, R.C., Shrikhande, S.S., Parker, E.T.: Further results on the construction
  of mutually orthogonal {L}atin squares and the falsity of {E}uler's
  conjecture. Canadian J. Math.  \textbf{12},  189--203 (1960).
  \doi{10.4153/CJM-1960-016-5}

\bibitem{bright2021sat}
Bright, C., Cheung, K., Stevens, B., Kotsireas, I., Ganesh, V.: A {SAT}-based
  resolution of {Lam's} problem. In: Proceedings of the Thirty-Fifth {AAAI}
  Conference on Artificial Intelligence (2021)

\bibitem{MR27520}
Bruck, R.H., Ryser, H.J.: The nonexistence of certain finite projective planes.
  Canad. J. Math.  \textbf{1},  88--93 (1949). \doi{10.4153/cjm-1949-009-2}

\bibitem{MR1398191}
Colbourn, C.J., Dinitz, J.H.: Making the {MOLS} table. In: Computational and
  constructive design theory, Math. Appl., vol.~368, pp. 67--134. Kluwer Acad.
  Publ., Dordrecht (1996). \doi{10.1007/978-1-4757-2497-4\_5}

\bibitem{colbourn2006handbook}
Colbourn, C.J., Dinitz, J.H.: Handbook of combinatorial designs. CRC press
  (2006). \doi{10.1201/9781420010541}

\bibitem{dantzig1963linear}
Dantzig, G.: Linear programming and extensions. No.~48 in Princeton Landmarks
  in Mathematics and Physics, Princeton University Press (1963).
  \doi{10.1515/9781400884179}

\bibitem{MR1409812}
Dixon, J.D., Mortimer, B.: Permutation groups, Graduate Texts in Mathematics,
  vol.~163. Springer-Verlag, New York (1996). \doi{10.1007/978-1-4612-0731-3}

\bibitem{MR1165804}
Elliott, J.R., Gibbons, P.B.: The construction of subsquare free {L}atin
  squares by simulated annealing. Australas. J. Combin.  \textbf{5},  209--228
  (1992)

\bibitem{euler_recherches_1782}
Euler, L.: Recherches sur un nouvelle espèce de quarrés magiques.
  Verhandelingen uitgegeven door het zeeuwsch Genootschap der Wetenschappen te
  Vlissingen  \textbf{9},  85--239 (Jan 1782),
  \url{https://scholarlycommons.pacific.edu/euler-works/530}

\bibitem{gomes2004improved}
Gomes, C.P., Regis, R.G., Shmoys, D.B.: An improved approximation algorithm for
  the partial latin square extension problem. Operations Research Letters
  \textbf{32}(5),  479--484 (2004). \doi{10.1016/j.orl.2003.09.007}

\bibitem{gurobi_sym}
{Gurobi Optimization, LLC}: {Gurobi Optimizer Reference Manual},
  \url{https://www.gurobi.com/documentation/9.1/refman/}

\bibitem{MR1971432}
Hassani, M.: Derangements and applications. J. Integer Seq.  \textbf{6}(1),
  Article 03.1.2, 8 (2003)

\bibitem{hooker}
Hooker, J.N.: Integrated methods for optimization, International series in
  operations research and management science, vol.~100. Springer (2007).
  \doi{10.1007/978-1-4614-1900-6}

\bibitem{huang2019investigating}
Huang, P., Liu, M., Ge, C., Ma, F., Zhang, J.: Investigating the existence of
  orthogonal golf designs via satisfiability testing. In: Proceedings of the
  2019 on International Symposium on Symbolic and Algebraic Computation. pp.
  203--210 (2019). \doi{10.1145/3326229.3326232}

\bibitem{kidd2012existence}
Kidd, M.P.: On the existence and enumeration of sets of two or three mutually
  orthogonal {Latin} squares with application to sports tournament scheduling.
  Ph.D. thesis, Stellenbosch University (2012),
  \url{http://hdl.handle.net/10019.1/20038}

\bibitem{MR3424339}
Kokkala, J.I., \"{O}sterg\aa rd, P.R.J.: Classification of {G}raeco-{L}atin
  cubes. J. Combin. Des.  \textbf{23}(12),  509--521 (2015).
  \doi{10.1002/jcd.21400}

\bibitem{MR1018454}
Lam, C.W.H., Thiel, L., Swiercz, S.: The nonexistence of finite projective
  planes of order {$10$}. Canad. J. Math.  \textbf{41}(6),  1117--1123 (1989).
  \doi{10.4153/CJM-1989-049-4}

\bibitem{laywine1998discrete}
Laywine, C.F., Mullen, G.L.: Discrete mathematics using Latin squares. John
  Wiley \& Sons (1998)

\bibitem{ma2013finding}
Ma, F., Zhang, J.: Finding orthogonal latin squares using finite model
  searching tools. Science China Information Sciences  \textbf{56}(3), ~1--9
  (2013). \doi{10.1007/s11432-011-4343-3}

\bibitem{MR1373306}
Magos, D.: Tabu search for the planar three-index assignment problem. J. Global
  Optim.  \textbf{8}(1),  35--48 (1996). \doi{10.1007/BF00229300}

\bibitem{MR3756068}
Mariot, L., Picek, S., Jakobovic, D., Leporati, A.: Evolutionary algorithms for
  the design of orthogonal {L}atin squares based on cellular automata. In:
  G{ECCO}'17---{P}roceedings of the 2017 {G}enetic and {E}volutionary
  {C}omputation {C}onference. pp. 306--313. ACM, New York (2017).
  \doi{10.1145/3071178.3071284}

\bibitem{nauty}
McKay, B.D.: {\tt nauty} graph isomorphism software, available at
  \url{https://pallini.di.uniroma1.it/} (2021)

\bibitem{MR2291523}
McKay, B.D., Meynert, A., Myrvold, W.: Small {L}atin squares, quasigroups, and
  loops. J. Combin. Des.  \textbf{15}(2),  98--119 (2007).
  \doi{10.1002/jcd.20105}

\bibitem{MR2176596}
McKay, B.D., Wanless, I.M.: On the number of {L}atin squares. Ann. Comb.
  \textbf{9}(3),  335--344 (2005). \doi{10.1007/s00026-005-0261-7}

\bibitem{MR1505714}
Moore, E.H.: Tactical {M}emoranda {I}--{III}. Amer. J. Math.
  \textbf{18}(3--4),  264--303 (1896). \doi{10.2307/2369797}

\bibitem{Moura1996}
Moura, L.: Polyhedral Methods in Design Theory, pp. 227--254. Springer US,
  Boston, MA (1996). \doi{10.1007/978-1-4757-2497-4\_9}

\bibitem{Moura1999}
Moura, L.: A polyhedral algorithm for packings and designs. In:
  Ne{\v{s}}et{\v{r}}il, J. (ed.) Algorithms - ESA' 99. pp. 462--475. Springer
  Berlin Heidelberg, Berlin, Heidelberg (1999). \doi{10.1007/3-540-48481-7\_40}

\bibitem{mourtos_pc}
Mourtos, I.: Personal communication (2020)

\bibitem{mourtos2003integer}
Mourtos, I.: Integer and Constraint Programming methods for Mutually Orthogonal
  Latin Squares. Ph.D. thesis, London School of Economics and Political Science
  (United Kingdom) (2003), \url{http://etheses.lse.ac.uk/id/eprint/2515}

\bibitem{cliquer}
Niskanen, S., \"{O}sterg\o{a}rd, P.R.J.: Cliquer user's guide, version 1.0.
  Tech. Rep.~T48, Communications Laboratory, Aalto University, Espoo, Finland
  (2003)

\bibitem{ortools}
Perron, L., Furnon, V.: {OR-Tools},
  \url{https://developers.google.com/optimization/}

\bibitem{MR0027248}
Sade, A.: Enum\'{e}ration des {C}arr\'{e}s {L}atins. {A}pplication au {$7^e$}
  {O}rdre. {C}onjecture pour les {O}rdres {S}up\'{e}rieurs. Published by the
  Author, Marseille (1948)

\bibitem{tarry_problem_1900}
Tarry, G.: Le {Problèm} des 36 {Officiers}. Compte Rendu de la Association
  Fran\c{c}aise pour l'Avancement des Sciences.  \textbf{29}(1),  122--123
  (1900), \url{https://gallica.bnf.fr/ark:/12148/bpt6k201192t/f3.item}

\bibitem{A002865}
The On-Line Encyclopedia of Integer Sequences, published electronically at
  \url{https://oeis.org} (2021), sequence
  \href{https://oeis.org/A000315}{A000315}

\bibitem{A000315}
The On-Line Encyclopedia of Integer Sequences, published electronically at
  \url{https://oeis.org} (2021), sequence
  \href{https://oeis.org/A002865}{A002865}

\bibitem{sagemath}
{The Sage Developers}: {S}ageMath, the {S}age {M}athematics {S}oftware {S}ystem
  ({V}ersion 9.2) (2021), \url{https://www.sagemath.org}

\bibitem{zaikin2015search}
Zaikin, O., Kochemazov, S.: The search for systems of diagonal latin squares
  using the {SAT}@home project. International Journal of Open Information
  Technologies  \textbf{3}(11), ~4--9 (2015)

\end{thebibliography}

\newpage

\section*{Appendix}
\setcounter{theorem}{0}
\setcounter{proposition}{0}

\begin{proposition} Let $F(n)$ be the $n$th Fibonacci number.  Then
  $
    a(n) = F(n-2).
  $
  \end{proposition}
  \begin{proof}
    Let $\rho \in A(n)$.  If $\rho(n-1) = n-2$ then $\rho(i) < n-3$ for all $i < n-5$ and $\rho(n-3) \neq n-3$.  Thus $\rho(n-4) = n-3$ and $\rho$ with its last two positions deleted is in $A(n-2)$.  If $\rho(n-1) \neq n-2$, then $\rho(i) < n-2$ for all $i < n-4$ so $\rho(n-3) = n-2$.  In this case $\rho$ with symbol $n-1$ deleted is in $A(n-1)$. From this we can conclude
    \[
      a(n) \leq a(n-1) + a(n-2).
    \]
    Given any $\rho' \in A(n-1)$ we can produce a $\rho \in A(n)$ by inserting symbol $n-1$ in the second last position.  No two $\rho', \rho'' \in A(n-1)$ can produce the same $\rho \in A(n)$ this way and in every case $\rho(n-3) = n-2$. Given any $\rho' \in A(n-2)$ we can produce a $\rho \in A(n)$ by appending $n-1$ followed by $n-2$. No two $\rho', \rho'' \in A(n-2)$ can produce the same $\rho \in A(n)$ this way and in every case $\rho(n-1) = n-2$. Since symbol $n-2$ is in a different position when constructing from $A(n-1)$ or  $A(n-2)$, we can conclude that 
    \[
      a(n) \geq a(n-1) + a(n-2).
    \]
Since $a(3) = a(4) = 1$ we conclude that $a(n) = F(n-2)$. \qed
    \end{proof}

\begin{theorem}
      Let $X$ and $Y$ be a pair of orthogonal latin squares.  They are isomorphic to a pair $X'$, $Y'$ where the first row of both is in lexicographic order, the first column of $X'$ is in lexicographic order and the ordered cycle type of $X'$ and $Y'$ is non-decreasing.
\end{theorem}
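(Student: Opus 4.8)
The plan is to start from the normalized form already described for Appa \etal's reduction and then pin down the precise subgroup of symmetries that survives this normalization, using it to sort the cycle lengths. First I would recall that, by permuting columns to sort the first row of $X$, permuting the nontrivial rows to sort the first column of $X$, and permuting the symbols of $Y$ to sort its first row, any orthogonal pair is isomorphic to one satisfying $X_{0j}=j$, $X_{i0}=i$ and $Y_{0j}=j$ for all $i,j$. In this form $Y_{00}=0$, so the first column permutation $\rho(i)=Y_{i0}$ fixes $0$; and because the pair $(j,j)$ already occurs in cell $(0,j)$, orthogonality forces $\rho(i)\neq i$ for $i\geq 1$. Hence $\rho$ restricts to a derangement of $\{1,\dotsc,n-1\}$, whose cycle lengths form a partition of $n-1$ into parts of size at least two (matching the count $b(n)$), and this partition is the underlying invariant that the ordered cycle type merely records in some order.

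The key step is to show that every conjugate of $\rho$ by a permutation fixing $0$ can be realized while keeping the three normalization conditions intact. I would take an arbitrary $\pi$ with $\pi(0)=0$ and apply it simultaneously as the row permutation, the column permutation, and the symbol permutation of \emph{both} squares. A direct check gives $X'_{ij}=\pi(X_{\pi^{-1}(i),\pi^{-1}(j)})$ and likewise for $Y'$; evaluating at $i=0$ and at $j=0$ and using $\pi(0)=0$ confirms that $X'_{0j}=j$, $X'_{i0}=i$ and $Y'_{0j}=j$ are all preserved, while the new first column permutation becomes $\rho'=\pi\rho\pi^{-1}$. Since these operations are all latin-square symmetries, the latin property and orthogonality are automatically maintained, so the normalized pair with first column permutation $\pi\rho\pi^{-1}$ is isomorphic to the original one.

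With this conjugation freedom in hand, the remainder is a relabeling argument. I would observe that in the ordered cycle representation the disjoint cycles are listed by increasing least element (two disjoint cycles have distinct minimal symbols, so lexicographic comparison is decided at the first entry), and therefore the ordered cycle type records the cycle lengths in the order of their minimal symbols. Writing the sorted cycle lengths as $l_{(0)}\leq l_{(1)}\leq\dotsb\leq l_{(c)}$, I would choose $\pi$ to send the symbols $\{1,\dotsc,l_{(0)}\}$ onto a shortest cycle, the next block of symbols onto a next-shortest cycle, and so on. Then the cycle with least element $1$ is a shortest cycle, the cycle with the next least element is a next-shortest one, and so forth, so the ordered cycle type of $\pi\rho\pi^{-1}$ equals $(l_{(0)},l_{(1)},\dotsc,l_{(c)})$, which is non-decreasing; this yields the required pair $X',Y'$.

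I expect the main obstacle to be the second paragraph: correctly setting up the single combined symmetry (rows, columns, and both symbol sets acting by the same $\pi$) and verifying that it simultaneously fixes the two first rows and the first column of $X$ while conjugating $\rho$. Once arbitrary conjugation is available, the cycle-type reduction is purely combinatorial, and the fact that the cycle type is an invariant partition shows that sorting it into non-decreasing order is the natural canonical choice. For the remark that the statement extends to $\mols{k}{n}$, I would note that the very same simultaneous conjugation by $\pi$ preserves the normalization of all squares at once, so the reduction can be applied to the first column permutation of a chosen orthogonal mate of~$X$.
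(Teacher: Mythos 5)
Your proposal is correct and takes essentially the same approach as the paper: both realize conjugation of the first-column permutation $\rho$ by a permutation fixing $0$ while preserving the normalization, and then choose that permutation so the cycle containing the smallest symbols is shortest. The only difference is packaging—you apply $\pi$ simultaneously to rows, columns, and symbols, whereas the paper decomposes the same net transformation into a symbol permutation, then a column permutation restoring the first rows, then a row permutation restoring the first column of $X$.
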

\begin{proof}
  We transform $X$ and $Y$ into the desired form by permuting the rows and columns of both squares so that the first row and column of $X$ are in lexicographic order.  Then we permute the symbol set of $Y$ such that the first row of $Y$ is in lexicographic order.  Let $\rho$ be the first column permutation of these transformed $X$ and $Y$ and let $t = (l_0,l_1,\ldots,l_c)$ be the ordered cycle type.  Let $t' = (l_{j_0},l_{j_1},\ldots,l_{j_c})$ be the list $t$ sorted in non-decreasing order. Let
  \[
    r = (a_{j_0,0},a_{j_0,1},\ldots,a_{j_0,l_{j_0}-1},a_{j_1,0},\ldots,a_{j_1,l_{j_1}-1},\ldots,a_{j_c,0},\ldots,a_{j_c,l_{j_c}-1}) 
\]
be a permutation of $\{1, 2, \ldots n-1\}$ in standard list form, not disjoint-cycle form.  Extend $r$ by setting $r(0) = 0$. Apply permutation $r$ to the symbols of both $X$ and $Y$, and call the resulting squares $X'$ and $Y'$ so $X'_{i0} = r(i)$ and $Y'_{i0} = r(Y_{i0})$.  Permute the columns of both $X'$ and $Y'$ so the first rows of each are in lexicographic order.  Now apply $r^{-1}$ to the rows of both $X'$ and $Y'$ so that $X'_{i0} = r(r^{-1}(i)) = i$ and $Y'_{i0} = r(Y_{r^{-1}(i)0}) = r \rho r^{-1}(i)$, the conjugate of $\rho$ by $r$.  Conjugation preserves the cycle structure of a permutation and replaces each element, $x$ of $\rho$ in disjoint cycle notation by $r(x)$ \cite{MR1409812}. Thus
\[
  r \rho r^{-1} = (1,2,\ldots,l_{j_0}-1)(l_{j_0},\ldots,l_{j_1}-1)\cdots(l_{j_{c-1}},\ldots,l_{j_c}-1).
  \]
  And thus the ordered cycle type of $X'$ and $Y'$ is non-decreasing. \qed
  \end{proof}

\addtolength{\tabcolsep}{2pt}
\renewcommand{\arraystretch}{1}
\begin{figure}[t]\centering
\begin{tabular}{|cccccccc|} \hline
\cellcolor{blue!25}0 & \cellcolor{red!25}1 & \cellcolor{green!25}2 & \cellcolor{yellow!25}3 & \cellcolor{cyan!25}4 & \cellcolor{magenta!25}5 & \cellcolor{brown!25}6 & \cellcolor{violet!25}7  \\ 
\cellcolor{cyan!25}4 & \cellcolor{magenta!25}5 & \cellcolor{brown!25}6 & \cellcolor{violet!25}7 & \cellcolor{blue!25}0 & \cellcolor{red!25}1 & \cellcolor{green!25}2 & \cellcolor{yellow!25}3  \\
\cellcolor{brown!25}6 & \cellcolor{violet!25}7 & \cellcolor{cyan!25}4 & \cellcolor{magenta!25}5 & \cellcolor{green!25}2 & \cellcolor{yellow!25}3 & \cellcolor{blue!25}0 & \cellcolor{red!25}1  \\
\cellcolor{green!25}2 & \cellcolor{yellow!25}3 & \cellcolor{blue!25}0 & \cellcolor{red!25}1 & \cellcolor{brown!25}6 & \cellcolor{violet!25}7 & \cellcolor{cyan!25}4 & \cellcolor{magenta!25}5  \\
\cellcolor{yellow!25}3 & \cellcolor{green!25}2 & \cellcolor{red!25}1 & \cellcolor{blue!25}0 & \cellcolor{violet!25}7 & \cellcolor{brown!25}6 & \cellcolor{magenta!25}5 & \cellcolor{cyan!25}4  \\
\cellcolor{violet!25}7 & \cellcolor{brown!25}6 & \cellcolor{magenta!25}5 & \cellcolor{cyan!25}4 & \cellcolor{yellow!25}3 & \cellcolor{green!25}2 & \cellcolor{red!25}1 & \cellcolor{blue!25}0  \\
\cellcolor{magenta!25}5 & \cellcolor{cyan!25}4 & \cellcolor{violet!25}7 & \cellcolor{brown!25}6 & \cellcolor{red!25}1 & \cellcolor{blue!25}0 & \cellcolor{yellow!25}3 & \cellcolor{green!25}2  \\
\cellcolor{red!25}1 & \cellcolor{blue!25}0 & \cellcolor{yellow!25}3 & \cellcolor{green!25}2 & \cellcolor{magenta!25}5 & \cellcolor{cyan!25}4 & \cellcolor{violet!25}7 & \cellcolor{brown!25}6  \\ \hline
\end{tabular}
\quad
\begin{tabular}{|cccccccc|} \hline
\cellcolor{blue!25}0 & \cellcolor{red!25}1 & \cellcolor{green!25}2 & \cellcolor{yellow!25}3 & \cellcolor{cyan!25}4 & \cellcolor{magenta!25}5 & \cellcolor{brown!25}6 & \cellcolor{violet!25}7  \\
\cellcolor{brown!25}6 & \cellcolor{violet!25}7 & \cellcolor{cyan!25}4 & \cellcolor{magenta!25}5 & \cellcolor{green!25}2 & \cellcolor{blue!25}0 & \cellcolor{yellow!25}3 & \cellcolor{red!25}1  \\
\cellcolor{violet!25}7 & \cellcolor{blue!25}0 & \cellcolor{yellow!25}3 & \cellcolor{cyan!25}4 & \cellcolor{magenta!25}5 & \cellcolor{green!25}2 & \cellcolor{red!25}1 & \cellcolor{brown!25}6  \\
\cellcolor{red!25}1 & \cellcolor{brown!25}6 & \cellcolor{magenta!25}5 & \cellcolor{green!25}2 & \cellcolor{blue!25}0 & \cellcolor{cyan!25}4 & \cellcolor{violet!25}7 & \cellcolor{yellow!25}3  \\
\cellcolor{magenta!25}5 & \cellcolor{cyan!25}4 & \cellcolor{violet!25}7 & \cellcolor{brown!25}6 & \cellcolor{red!25}1 & \cellcolor{yellow!25}3 & \cellcolor{blue!25}0 & \cellcolor{green!25}2  \\
\cellcolor{yellow!25}3 & \cellcolor{green!25}2 & \cellcolor{red!25}1 & \cellcolor{blue!25}0 & \cellcolor{violet!25}7 & \cellcolor{brown!25}6 & \cellcolor{magenta!25}5 & \cellcolor{cyan!25}4  \\
\cellcolor{green!25}2 & \cellcolor{magenta!25}5 & \cellcolor{brown!25}6 & \cellcolor{red!25}1 & \cellcolor{yellow!25}3 & \cellcolor{violet!25}7 & \cellcolor{cyan!25}4 & \cellcolor{blue!25}0  \\
\cellcolor{cyan!25}4 & \cellcolor{yellow!25}3 & \cellcolor{blue!25}0 & \cellcolor{violet!25}7 & \cellcolor{brown!25}6 & \cellcolor{red!25}1 & \cellcolor{green!25}2 & \cellcolor{magenta!25}5  \\ \hline
\end{tabular}
\quad
\begin{tabular}{|cccccccc|} \hline
\cellcolor{blue!25}0 & \cellcolor{red!25}1 & \cellcolor{green!25}2 & \cellcolor{yellow!25}3 & \cellcolor{cyan!25}4 & \cellcolor{magenta!25}5 & \cellcolor{brown!25}6 & \cellcolor{violet!25}7  \\
\cellcolor{red!25}1 & \cellcolor{blue!25}0 & \cellcolor{yellow!25}3 & \cellcolor{green!25}2 & \cellcolor{magenta!25}5 & \cellcolor{cyan!25}4 & \cellcolor{violet!25}7 & \cellcolor{brown!25}6  \\
\cellcolor{green!25}2 & \cellcolor{yellow!25}3 & \cellcolor{blue!25}0 & \cellcolor{red!25}1 & \cellcolor{brown!25}6 & \cellcolor{violet!25}7 & \cellcolor{cyan!25}4 & \cellcolor{magenta!25}5  \\
\cellcolor{yellow!25}3 & \cellcolor{green!25}2 & \cellcolor{red!25}1 & \cellcolor{blue!25}0 & \cellcolor{violet!25}7 & \cellcolor{brown!25}6 & \cellcolor{magenta!25}5 & \cellcolor{cyan!25}4  \\
\cellcolor{cyan!25}4 & \cellcolor{magenta!25}5 & \cellcolor{brown!25}6 & \cellcolor{violet!25}7 & \cellcolor{blue!25}0 & \cellcolor{red!25}1 & \cellcolor{green!25}2 & \cellcolor{yellow!25}3  \\
\cellcolor{magenta!25}5 & \cellcolor{cyan!25}4 & \cellcolor{violet!25}7 & \cellcolor{brown!25}6 & \cellcolor{red!25}1 & \cellcolor{blue!25}0 & \cellcolor{yellow!25}3 & \cellcolor{green!25}2  \\
\cellcolor{brown!25}6 & \cellcolor{violet!25}7 & \cellcolor{cyan!25}4 & \cellcolor{magenta!25}5 & \cellcolor{green!25}2 & \cellcolor{yellow!25}3 & \cellcolor{blue!25}0 & \cellcolor{red!25}1  \\
\cellcolor{violet!25}7 & \cellcolor{brown!25}6 & \cellcolor{magenta!25}5 & \cellcolor{cyan!25}4 & \cellcolor{yellow!25}3 & \cellcolor{green!25}2 & \cellcolor{red!25}1 & \cellcolor{blue!25}0  \\ \hline
\end{tabular}
\caption{A set of $\mols{3}{8}$ found by our search.}
\end{figure}

\end{document}